\def\S{\mathcal {S}}
\def\M{\mathcal {M}}
\def\N{\mathcal {N}}
\def\I{\mathcal {I}}
\def\A{\mathcal {A}}
\def\H{\mathcal {H}}
\def\X{\mathcal {X}}
\def\P{\mathscr{P}}
\def\C{\mathscr{C}}
\def\K{\mathscr{K}}
\def \p {\mathbf{p}}
\newcommand{\Rmnum}[1]{\expandafter\@slowromancap\romannumeral #1@}
\newtheorem{thm}{Theorem}
\newtheorem{clm}{Claim}
\newtheorem{lem}{Lemma}
\newtheorem{dfn}{Definition}
\newtheorem{obs}{Proposition}
\newtheorem{conjecture}{Conjecture}
\begin{document}
\title{Maxwell-independence: a new rank estimate for the $3$-dimensional generic rigidity matroid}
\author{Jialong Cheng\footnotemark[1]{  }\footnotemark[2]{}, Meera Sitharam\footnotemark[1]{}}
\date{}

\footnotetext[1]{University of Florida, Gainesville, FL 32611-6120, US, research supported in part by NSF grant DMSO714912 and
CCF 0610136, and a gift from SolidWorks Corporation
}

\footnotetext[2]{corresponding author: jicheng@cise.ufl.edu}
\maketitle

\begin{abstract}
The problem of combinatorially determining the rank of the 3-dimensional
bar-joint {\em rigidity matroid} of a graph is an important open problem in
combinatorial rigidity theory. Maxwell's condition states that the edges of a graph $G=(V, E)$ are {\em independent} in its $d$-dimensional generic rigidity matroid only if $(a)$ the number of edges $|E|$ $\le$ $d|V| - {d+1\choose 2}$, and $(b)$ this holds for every induced subgraph with at least $d$ vertices. We call such graphs {\em Maxwell-independent} in $d$ dimensions.\footnote[2]{
{\bf Note:} Maxwell-independent graphs are called ``$d$-sparse''  and ``$(d, {d+1 \choose 2})$-sparse'' in the literature (see \cite{JacksonBound2011}, \cite{LeeStreinu2008}). But we note that dense and sparse graphs have a variety of different meanings in graph theory. Our terminology is motivated by Maxwell's observation in 1864 that every graph $G$ that is rigid in $d$ dimensions must contain a Maxwell-independent subgraph that has at least $d|V| -{d+1 \choose 2}$ edges \cite{maxwell:equilibrium:1864}.}
Laman's theorem shows that the converse holds for $d=2$ and thus every maximal Maxwell-independent set of $G$ has size equal to the  rank of the $2$-dimensional generic rigidity matroid. While this is
false for $d=3$, we show that every maximal, Maxwell-independent set of a graph $G$ has size at least the rank of the $3$-dimensional generic rigidity matroid of $G$. This answers a question posed by Tib\'or Jord\'an at the 2008 rigidity workshop at BIRS \cite{bib:birs}.

Along the way, we construct subgraphs (1) that yield alternative formulae for a rank upper bound for Maxwell-independent graphs and (2) that contain a maximal (true) independent set. We extend this bound to special classes of non-Maxwell-independent graphs. One further consequence is a simpler proof of
correctness for existing algorithms that give rank bounds.
\end{abstract}

\section{Introduction}\label{sec:intro}
It is a long open problem to combinatorially characterize the $3$-dimensional bar-joint rigidity of graphs.
The problem is at the intersection of combinatorics and
algebraic geometry, and  crops up in practical algorithmic applications ranging from
mechanical computer aided design to
molecular modeling.

\medskip\noindent
The problem is equivalent to combinatorially determining
the generic rank of the $3$-dimensional bar-joint rigidity matrix of a graph.
The {\em $d$-dimensional bar-joint rigidity matrix} of a graph  $G = (V,E)$, denoted $R_d(G)$,
is a matrix of indeterminates. Let $p_1(v), p_2(v), \ldots, p_d(v)$ represent the coordinate position $\p(v) \in \mathbb{R}^d$ of
the {\em joint} corresponding to a
vertex $v\in V$.
The matrix $R_d(G)$ has one row for each edge $e\in E $ and $d$ columns for
each vertex $v \in V$.
The row corresponding to $e = \{u, v\} \in E$
represents the {\em bar}
connecting $p(u)$ to $p(v)$ and has $d$ non-zero indeterminate entries
$\p(u)-\p(v)$ (resp. $\p(v)-\p(u)$), in the $d$ columns corresponding to $u$
(resp. $v$) and zero in the other entries.

\medskip\noindent
A subset of edges $E^\prime$, or a subgraph $(V^\prime, E^\prime)$, of a graph $G=(V, E)$ is said to be {\em independent} (we drop ``bar-joint'' from now on) in $d$-dimensions, when the set of rows of $R_d(G)$ corresponding to $E^\prime$ is {\em generically independent}, or independent for a generic instantiation of the indeterminate entries.
This yields the $d$-dimensional {\em generic rigidity matroid} associated with a graph $G$. The graph is {\em rigid} if the number of generically independent rows or
the rank of $R_d(G)$  is maximal, i.e., $d|V| - {d+1 \choose 2}$, where
${d+1 \choose 2}$ is the number of rotational and translational degrees-of-freedom of a rigid body in $\mathbb{R}^d$ \cite{graver:servatius:rigidityBook:1993}.

\medskip\noindent
Clearly, the number of edges of $G=(V, E)$ is a trivial upper bound on the generic rank of
$R_d(G)$, or alternatively the {\it rank of the $d$-dimensional rigidity matroid of $G$}, which we denote by rank$_d(G)$. Thus, a graph is independent in $d$ dimensions only if $(a)$ $|E|$ does not exceed $d|V| - {d+1\choose 2}$; and $(b)$ this holds for every induced subgraph with at least $d$ vertices. This is called {\em Maxwell's condition in $d$ dimensions} \cite{maxwell:equilibrium:1864}, and we call such graphs (or their edge sets) $G$ {\em Maxwell-independent} in $d$ dimensions. 

\medskip\noindent
In other words, Maxwell's condition states that for any subset of edges of
$G$, independence implies Maxwell-independence. For $d=2$, the famous Laman's theorem states that the converse is also true. I.e., Maxwell-independence implies independence.
Thus (1) the rank of the $2$-dimensional generic rigidity matroid of a graph $G$ is exactly the size of any maximal, Maxwell-independent set (here, by {\em maximal} we mean that no edge can be added without violating Maxwell-independence) and (2) all maximal, Maxwell-independent sets of $G$ must have the same number of edges.

\medskip\noindent
For $d=3$, however, different maximal, Maxwell-independent sets may
have different sizes, see Figure \ref{fig:banana}.
I.e, for $d=3$, the collection of Maxwell-independent sets does
not yield a matroid.
Clearly, any maximal independent subgraph of $G$
is itself Maxwell-independent, so the rank of the generic rigidity matroid of a
graph is at most the
size of {\em some} maximal Maxwell-independent set and this
generalizes to any dimension.
But this only yields the trivial upper bound, i.e., number of edges, for Maxwell-independent graphs.
For other special classes of graphs such as graphs of
bounded degree, graphs that satisfy certain covering conditions
etc., alternative combinatorial formulae are known \cite{JacksonJordanrank:2006, JacksonJordansparse:2005}, that give better bound than the number of edges in some cases.

\begin{center}
\begin{figure}[!h]
\begin{center}
\scalebox{0.3}[0.3]{\includegraphics{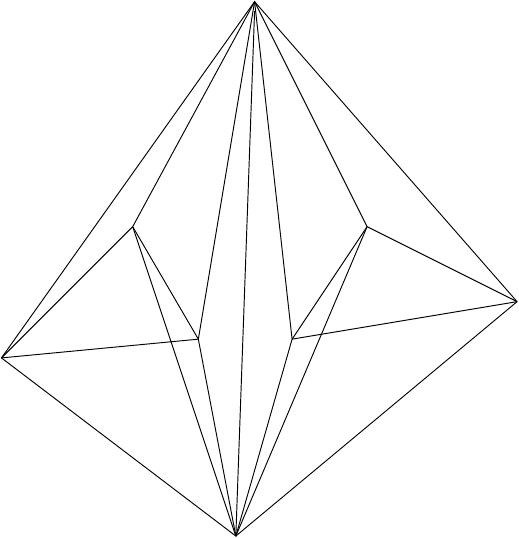} \includegraphics{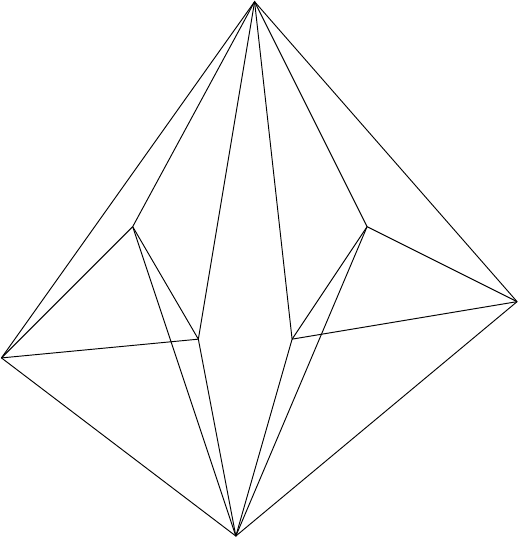} \includegraphics{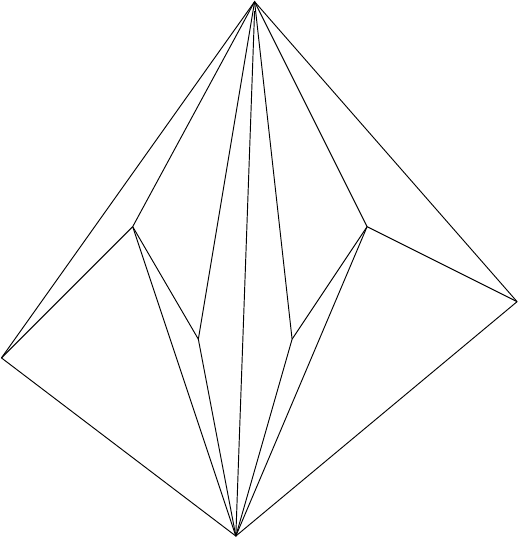}}
\end{center}
\caption{The graph on the left is called a {\em double-banana} and consists of two $K_5$'s intersecting on an edge. The graphs on the middle and the right are two maximal Maxwell-independent sets of different sizes for the graph on the left (the middle is of size $18$ and the right is of size $17$)
.}\label{fig:banana}
\end{figure}
\end{center}
\medskip
\noindent
This leads to the following natural question concerning the rank of the $3$-dimensional generic rigidity matroid. The question was posed by Tib\'or Jord\'an during the 2008 BIRS rigidity workshop \cite{bib:birs}.

\medskip
{\em Question ($\star$):}
Does {\it every} maximal, Maxwell-independent subgraph (subsets of edges) of a graph  $G$ have size at least the rank of the $3$-dimensional generic rigidity matroid of $G$?

Note that the answer to Question ($\star$) would be obvious if every maximal Maxwell-independent set of a given graph $G$ contains a maximal independent set of $G$. However, this is not the case. See Figure \ref{fig:bananaBar}.

\begin{center}
\begin{figure}[!h]
\begin{center}
\scalebox{0.3}[0.3]{\includegraphics{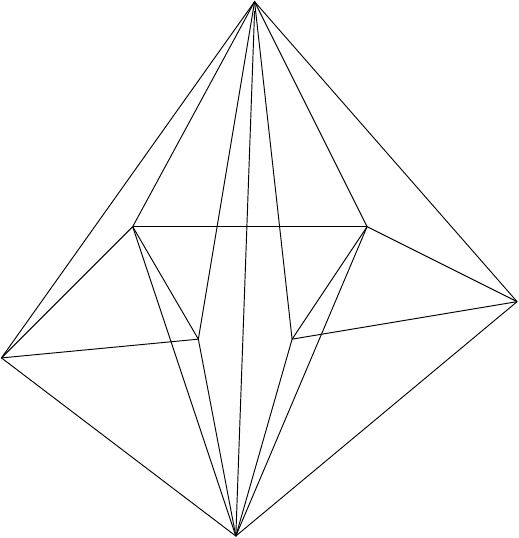}\includegraphics{banana_s1-eps-converted-to.pdf} }
\end{center}
\caption{On the left is a double-banana-bar, which consists of a double-banana and a bar connecting two vertices from each banana. Notice that this double-banana-bar is rigid, thus every maximal independent set in it has $3|V|-6=18$ edges. On the right we have a maximal Maxwell-independent set of the double-banana-bar, which has $3|V|-6=18$ edges. The figure on the right is dependent, so every maximal independent set of it has size less than $3|V|-6=18$. So the right figure cannot contain a maximal independent set of size $3|V|-6$. }\label{fig:bananaBar}
\end{figure}
\end{center}

\medskip
\noindent
Our main result (Theorem \ref{thm:main}) in Section \ref{sec:main} gives an affirmative answer to Question ($\star$) for $d=3$. Bill Jackson \cite{JacksonBound2011} has extended this result up to $d=5$. His proof is by contradiction and is hence nonconstructive. Our proof is constructive: for Maxwell-independent graphs, we give combinatorial formulae based on inclusion-exclusion (IE) counts upper bounding the rank; and we construct subgraphs ({\em independence assignments}) whose sizes meet this bound, and moreover contain a maximal true independent set (Theorems \ref{thm:weakrankIE} and \ref{thm:propermaximal}); this construction is of algorithmic interest. The construction leads to alternative upper bounds on rank related to Dress' formula (\cite{bib:Dress}, Section \ref{sec:knownbounds}) for certain classes of non-Maxwell-independent graphs that admit certain types of covers in Section \ref{sec:nonMaxwell} (Theorems \ref{thm:complete2thin} and \ref{thm:properComplete2thin}). However, algorithms for computing these covers are beyond the scope of this paper.

Several algorithms exist for combinatorially recognizing certain types of dependences for $d=3$  (\cite{bib:survey, andrewThesis,sitharam:zhou:tractableADG:2004}
). The simplest of these algorithms is a
minor modification (\cite{andrewThesis}) of Jacobs and Hendrickson's (\cite{Jacobs97analgorithm}) pebble game for $d=2$, and finds a maximal Maxwell-independent set (it may be neither the minimum sized one nor the 
maximum sized one). The techniques developed in this paper simplify the proofs of correctness for these algorithms.

\medskip

\medskip\noindent
In Section \ref{betterbound}, we also relate our bounds to existing bounds and conjectures.
In the concluding Section \ref{conclusion}, we pose open problems.

\section{Main Result and Proof}\label{sec:main}
In this section, we state and give the proof of the following main theorem. Note that Sections \ref{sec:main} and \ref{betterbound} deal exclusively with $d=3$ and we use rank$(G)$ to denote the rank of the $3$-dimensional generic rigidity matroid of graph $G$.

\begin{thm}\label{thm:main}
Let $\M$ be a maximal Maxwell-independent subgraph of a graph $G=(V,
E)$ and $\I$ be a maximal independent set of the $3$-dimensional generic
rigidity matroid of $G$. Then $|E(\M)| \ge |\I|$, where $E(\M)$ denotes the edge set of $\M$.
\end{thm}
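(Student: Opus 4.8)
My plan is to prove the equivalent inequality $|\I| = r(G) \le |\M|$, where $r(\cdot)$ denotes the rank function of the $3$-dimensional rigidity matroid of $G$, and to reduce it to a charging inequality internal to $\M$. Choose a maximal independent subset $\I_0 \subseteq \M$, so that $\I_0$ is independent, $|\I_0| = r(\M)$, and $\mathrm{span}(\I_0) = \mathrm{span}(\M)$. Extend $\I_0$ to a basis $\I = \I_0 \cup \I_1$ of the whole matroid. Every edge of $\M \setminus \I_0$ lies in $\mathrm{span}(\M) = \mathrm{span}(\I_0)$ and hence can never augment $\I_0$, so $\I_1 \subseteq E \setminus \M$ and $|\I_1| = r(G) - r(\M)$. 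Since $|\M| = r(\M) + (|\M| - r(\M))$ and $r(G) = r(\M) + |\I_1|$, the target inequality $|\M| \ge r(G)$ is equivalent to
\[
|\I_1| \;\le\; |\M| - r(\M),
\]
that is, each truly-independent edge added from outside $\M$ must be paid for by one unit of redundancy (nullity) sitting inside $\M$.

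I would then exhibit the source of this redundancy locally. Fix $e \in \I_1 \subseteq E \setminus \M$. Because $\M$ is \emph{maximal} Maxwell-independent and $e \notin \M$, the set $\M \cup \{e\}$ violates Maxwell's condition, so there is a block $B \subseteq \M \cup \{e\}$ with $e \in B$ and $|B| > 3|V(B)| - 6$. As $\M$ is itself Maxwell-independent, $B \setminus \{e\}$ is $(3,6)$-sparse, forcing $|B \setminus \{e\}| = 3|V(B)| - 6$; thus $B'_e := B \setminus \{e\} \subseteq \M$ is \emph{Maxwell-rigid} on its vertex set $W_e := V(B)$, which contains both endpoints of $e$. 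Here is the key rigidity input: if $B'_e$ were rigid it would span the complete graph on $W_e$, and hence span $e$, contradicting that $e$ is independent over $\M \supseteq B'_e$. Therefore $B'_e$ is Maxwell-rigid but \emph{not} rigid, so it is dependent in the rigidity matroid and contains a circuit --- a genuine unit of redundancy inside $\M$ (this is exactly the phenomenon illustrated by the double-banana, and the reason the statement is nontrivial).

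The main obstacle is the global bookkeeping: distinct $e, e' \in \I_1$ may produce overlapping blocks $B'_e, B'_{e'}$, so I must guarantee that their redundancies are not double-counted and that the total nullity $|\M| - r(\M)$ is at least $|\I_1|$. My plan is to organize the Maxwell-rigid, non-rigid blocks $\{B'_e\}$ into a \emph{cover} of the relevant part of $\M$ whose pairwise overlaps have a controlled, tree-like structure, and then to bound $r(G)$ from above by a rank inclusion--exclusion count over this cover: each block contributes at most its own rank, overlaps are subtracted, and because every such block is strictly rigidity-deficient the resulting count falls below the naive edge total by at least one unit for each augmenting edge. Making this precise requires (i) a structural existence theorem for such covers of Maxwell-independent but non-Maxwell-rigid graphs, and (ii) an assignment matching each $e \in \I_1$ to a distinct deficiency so that the inclusion--exclusion count is tight. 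This is the substantive machinery I would develop before invoking it here; granting it, the count yields $|\I_1| \le |\M| - r(\M)$, hence $r(G) = |\I| \le |\M|$, which is the claim. I expect step (i)--(ii) to be the crux, precisely because Maxwell-rigidity does not imply rigidity in $3$D and so the redundancy must be located and charged combinatorially rather than read off from a single spanning block.
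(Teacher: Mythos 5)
Your reduction and your local analysis track the paper's own proof quite closely: the paper likewise takes a maximal independent set $\I_\M$ of $\M$, extends it to a basis $\I$ of $G$, sets $\A=\I\setminus\I_\M$ (your $\I_1$), and charges each $e\in\A$ to a Maxwell-rigid subgraph of $\M$ containing both endpoints of $e$ (Observation~\ref{obs:vmmr}(\ref{obs:vmmr3})). But everything you defer in your final paragraph --- your items (i) and (ii) --- \emph{is} the content of the theorem, and you have not supplied it. Concretely, the missing step is the inequality
\[
\sum_i \mathrm{rank}(\M_i) \;-\; \sum_{e\in\H(\X)} (n_e-1) \;\ge\; \mathrm{rank}(\M)
\]
for the cover $\X$ of $\M$ by vertex-maximal Maxwell-rigid components. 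Such rank inclusion--exclusion bounds are \emph{false} for general $2$-thin covers of general graphs (this is the failure of the Dress conjecture noted in Section~\ref{sec:knownbounds}), so the count cannot simply be "granted"; one must exploit Maxwell-independence of $\M$. The paper does this via Theorem~\ref{thm:weakrankIE}, whose hypothesis --- the existence of an \emph{independence assignment}, i.e.\ maximal independent sets $\I_i$ of the $\M_i$ such that each shared edge is missing from at most one $\I_i$ containing it --- is established in Lemma~\ref{lem:assign}. That lemma in turn rests on the genuinely structural facts that the cover is $2$-thin (Observation~\ref{obs:vmmr}(\ref{obs:vmmr1})) and that every subgraph of the resulting component graph has average component-node degree strictly below $4$, hence is a generalized partial $3$-tree (Theorem~\ref{thm:comb}); the assignment is then built by induction along the leaf-removal order, using that a component node's at most $9$ shared edges are automatically independent in $3$D. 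Your proposal gives no indication of how the "controlled, tree-like structure" or the "assignment matching each $e$ to a distinct deficiency" would actually be obtained, and this is precisely where the difficulty lives.

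A second, smaller gap: your per-edge charge ("$B'_e$ contains a circuit") produces one unit of nullity for each $e\in\I_1$, but gives no additivity when several edges of $\I_1$ select the same (or overlapping) Maxwell-rigid blocks --- a single circuit in $B'_e=B'_{e'}$ cannot pay for both $e$ and $e'$. The paper sidesteps this by charging all of $\A_i$ simultaneously against the Maxwell count of one vertex-maximal component: since $\A$ is independent modulo $\mathrm{span}(\M)$, one has $\mathrm{rank}(\M_i)+|\A_i|=\mathrm{rank}(\M_i\cup\A_i)\le 3|V(\M_i)|-6=|\M_i|$, which is inequality~\eqref{eqn:MI}. You would need this (or an equivalent multiplicity-aware charge) before assembling the global count, in addition to the inclusion--exclusion machinery above.
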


\medskip
\noindent The proof requires a few definitions.

\begin{dfn}\label{dfn:Maxwell}
The {\em Maxwell count} for a graph $G=(V, E)$ in $3$ dimensions is $3|V|-|E|$. $G$ is said to be \emph{Maxwell-rigid} in $3$ dimensions, if there exists a Maxwell-independent subset $E^\star \subseteq E$ such that
the Maxwell count of $G^\star=(V, E^\star)$ is at most $6$. As exceptions, $j$-cliques  ($j\leq 2$) are considered to be Maxwell-independent and Maxwell-rigid.

A subgraph $G^\prime = (V^\prime,E^\prime)$ induced by $V^\prime \subseteq V$ is
said to be a {\em component} of $G$, if it is Maxwell-rigid. In addition, $G^\prime$ is called a {\em vertex-maximal component} of $G$, if it is Maxwell-rigid and there is no proper superset of $V^\prime$ that also induces a Maxwell-rigid subgraph of $G$. A component with $2$ vertices consists of a single edge of the graph, and we call it an {\em edge component}, or {\em trivial} component. Other components are called {\em non-trivial} components.
\end{dfn}


\medskip\noindent
The following concepts of covers and inclusion-exclusion formulae on covers from  \cite{crapo:structuralRigidity:1979, sitharam:zhou:tractableADG:2004, andrewThesis, bib:survey, lovasz:yemini, JacksonJordansparse:2005, JacksonJordanrank:2006} are important for the proof of Theorem \ref{thm:main}.

A \emph{cover} of a graph $G=(V, E)$ is a collection $\X$
of pairwise incomparable induced subgraphs $G_1, \ldots, G_m$ of $G$, each with at least two vertices, such that $\cup_{G_i\in \X} E(G_i) = E$, where $E(G_i)$ is the edge set of subgraph $G_i$. $V(G_i)$ denotes the vertex set of $G_i$. Let $G_i\cup G_j$ denote the graph $(V(G_i)\cup V(G_j)$, $E(G_i)\cup E(G_j))$ and $G_i\cap G_j$ denote the graph $(V(G_i)\cap V(G_j)$, $E(G_i)\cap E(G_j))$.

Given a graph $G$ with a cover $\X$ $=$  $\{G_1, \ldots, G_m\}$, we use $\H(\mathcal{X})$ to denote the set of all pairs of vertices $\{u, v\}$ such that $V(G_i) \cap V(G_j) = \{u,
v\}$ for some $1 \leq i < j \leq m$. Denote by $n_{\{u, v\}}$ the number of elements in $\mathcal{X}$ that contain both $u$ and $v$. Then we can define two different inclusion-exclusion formulae on covers as follows, where the first is used in the proof of Theorem \ref{thm:main} and the second is used later in the paper:

\begin{dfn}\label{dfn:IE}
Given a graph $G=(V, E)$, let $\X =\{e_1, \ldots, e_k$, $G_1, G_2, \ldots, G_m\}$ be a cover of $G$ where $e_1, \ldots, e_k$ are edge components and $G_1, G_2, \ldots, G_m$ are subgraphs with at least $3$ vertices.
The \emph{rank inclusion-exclusion (IE) count} of cover $\mathcal{X}$ is defined as the following:
\begin{equation*}
IE_{\text{rank}}(\X):=\sum\limits_{i=1}^m{rank(G_i)}  -  \sum\limits_{\{u, v\} \in \H(\X)\cap E}(n_{\{u, v\}}
-1)+k
\end{equation*}

The \emph{full rank inclusion-exclusion (IE) count} of cover $\X$ in is defined as
\begin{equation*}
\hbox{IE}_{\text{full}}(\X) :=
\sum\limits_{i=1}^m(3|V(G_i)|-6)  -  \sum\limits_{\{u, v\} \in \H(\X)}(n_{\{u, v\}}
-1)+k
\end{equation*}
\end{dfn}

\medskip
\noindent
The relationships between the two types of IE count defined in Definition \ref{dfn:IE} will be discussed in Section \ref{sec:knownbounds}. The proof of Theorem \ref{thm:main} only uses IE$_\text{rank}$.

Now we are ready to prove Theorem \ref{thm:main}.
\begin{proof}(of Theorem \ref{thm:main})
First, notice that if $\M$ itself is independent, we are done.
Similarly, if $\M$ is Maxwell-rigid, then we have $|E(\M)| = 3|V| -6 \ge$ rank$(G)$ = $|\I|$, hence we are done.

Let $\I_{\M}$ with $|\I_{\M}| =$ rank$(\M)$ be a maximal independent set of $\M$. Without loss of generality, let $\I_{\M} \subseteq \I$. Let $\A := \I\setminus\I_{\M}$. Thus $\text{span}(E(\M))$ $\cap$ $\A= \emptyset$. Here $\text{span}(E(\M))$ means the linear span of those rows of the rigidity matrix $R_3(G)$ corresponding to $E(\M)$.


Consider a cover $\X=\{e_1, \ldots, e_k$ $, \M_1, \M_2, \ldots, \M_m\}$ of $\M$ by the {\em complete} collection of vertex-maximal components, where $e_1, \ldots, e_k$ are edge components and $\M_1,$ $\M_2,$ $\ldots,$  $\M_m$ are non-trivial components. Next we show that for each edge $\{u, v\}$ in $\A$, there exists at least one non-trivial component $\M_i$ such that $u\in \M_i$ and $v\in \M_i$.

Since $\A \cap E(\M)$ $=\emptyset$, $e$ $=\{u, v\}$ $\in \A$ is not an edge component of $\M$. Hence if $u$ and $v$ lie inside any component of $\M$, the component must be non-trivial. If no component $\M_i$ contains both $u$ and $v$, then in fact no vertex-maximal component of $\M$ contains both $u$ and $v$, since $\X$ is the complete collection of vertex-maximal components of $\M$. Next we will show that $\M \cup \{e\}$ is Maxwell-independent.

Suppose not. We know there is a violation to Maxwell's condition in $\M \cup \{e\}$ and this must be caused by the addition of $e$, since $\M$ is Maxwell-independent. To violate Maxwell's condition, both endpoints of $e$ must lie inside a same non-trivial Maxwell-rigid subgraph of $\M$, and every non-trivial Maxwell-rigid subgraph of $\M$ lies inside a non-trivial vertex-maximal component of $\M$. This contradicts the fact that no vertex-maximal component of $\M$ contains both $u$ and $v$. Hence $\M \cup \{e\}$ is Maxwell-independent, contradicting the maximality of $\M$. So for each edge $e=\{u, v\}$ in $\A$, there exists at least one non-trivial component $\M_i$ such that $u\in \M_i$ and $v\in \M_i$.

Denote by $\A_i$ the set of edges of $\A$ both of whose endpoints are in $\M_i$. Hence 
\begin{equation}\label{eqn:IIMDecomp}
|\A| \le \sum_{i=1}^m |\A_i|
\end{equation}


Take $\H(\X)$ and $n_{\{u, v\}}$ as defined earlier in the section. We get
\begin{eqnarray}
|E(\M)|& =& \sum_{i=1}^k 1 +\sum_{i=1}^m |E(\M_i)| - \sum\limits_{\{u, v\} \in \H(\X)\cap E(\M)}(n_{\{u, v\}}
-1) \notag\\
&=& k+ \sum_{i=1}^m |E(\M_i)| - \sum\limits_{\{u, v\} \in \H(\X)\cap E(\M)}(n_{\{u, v\}} 
-1) \label{eqn:Md}
\end{eqnarray}

Since each $\M_i$ is Maxwell-rigid, adding any $e\in \A_i$ into $\M_i$ causes the number of edges in $\M_i$ to exceed $3|V(\M_i)|-6$ and in turn indicates the existence of a true dependence. However, $A_i$ $\cap$ $\text{span}(M_i)$ $=\emptyset$, since $\text{span}(E(\M))$ $\cap$ $\A_i= \emptyset$. It follows that $\M_i$ was already dependent even before $\A_i$ was added. I.e., to obtain an independent set in $\M_i$, at least $|\A_i|$ edges must be removed from $\M_i$. So we have

\begin{equation}\label{eqn:MI}
|E(\M_i)| \geq \text{rank}(\M_i) +|\A_i|
\end{equation}

Plugging \eqref{eqn:MI} into \eqref{eqn:Md}, we have 
 \begin{equation}\label{eqn:sum}
 |E(\M)| \geq \sum_{i=1}^m \text{rank}(\M_i)  - \sum_{\{u, v\} \in \H(\X)\cap E(\M)}(n_{\{u, v\}}
 -1) +\sum_{i=1}^m |\A_i| + k
 \end{equation}

From Proposition \ref{obs:vmmr}(\ref{obs:vmmr1}) below, we know that the cover $\X$ is 2-thin. Then we can apply Theorem \ref{thm:weakrankIE} below and obtain the following:
 \begin{equation}\label{eqn:indep}
 \sum_{i=1}^m \text{rank}(\M_i) - \sum_{\{u, v\} \in \H(\X)\cap E(\M)}(n_{\{u, v\}}-1) +k \geq \text{rank}(\M)=|\I_{\M}|.
 \end{equation}
Then, using \eqref{eqn:sum} and \eqref{eqn:IIMDecomp}, we obtain that 
\begin{eqnarray*}
|E(\M)|& \geq &|\I_{\M}| +\sum_{i=1}^m |\A_i| \hbox{ (using \eqref{eqn:sum} and \eqref{eqn:indep})}\\
 &\geq &|\I_{\M}| + |\A|  \hbox{ (using \eqref{eqn:IIMDecomp})} \\
 &=&|\I|,
\end{eqnarray*} which proves Theorem \ref{thm:main}. 
\end{proof}

\medskip\noindent
Note that the proof of Theorem \ref{thm:main} uses a cover by the complete collection of vertex-maximal components. This not only implies 2-thinness of the cover, but also strong 2-thinness. However, 2-thinness (Proposition \ref{obs:vmmr}(\ref{obs:vmmr1})) is sufficient for proving Theorem \ref{thm:main}. Strong 2-thinness is used in Section \ref{betterbound}.


In the remainder of this section, we state and prove Proposition \ref{obs:vmmr} and Theorem \ref{thm:weakrankIE} and the required lemmas. The following concept, as defined in \cite{JacksonJordanrank:2006}, is needed to state Proposition \ref{obs:vmmr}.

\begin{dfn}\label{dfn:2thin}
Let $\X =\{G_1, G_2, \ldots, G_m\}$ be a cover $G$. We say $\X$ is {\em $2$-thin} if $| V(G_i) \cap V(G_j)| \leq 2$ for all $1 \leq i < j \leq m$.  We say a $2$-thin cover $\X$ is {\em strong 2-thin} if for all $1 \leq i < j \leq m$, whenever $|V(G_i) \cap V(G_j)|=2$, then $G_i$ and $G_j$ in fact share an edge.
\end{dfn}

Next, we prove a lemma illustrating an elementary, but useful property of the union of two Maxwell-rigid graphs.
\begin{lem}\label{lem:join}
\begin{enumerate}[(a)]
\item \label{lem:join1} Given Maxwell-rigid graphs $\M_1$ and $\M_2$, if $V(\M_1)$ $\cap$ $V(\M_2)$ consists of two vertices $u$ and $v$ and $\{u, v\}$ $\not\in$ $E(\M_1)$ $\cup$ $E(\M_2)$, then $\M_1$ $\cup$ $\M_2$ is also Maxwell-rigid. 
\item \label{lem:join2} Given Maxwell-independent graph $\M$ and two Maxwell-rigid subgraph $\M_1$ and $\M_2$ of $\M$, if $|V(\M_1) \cap V(\M_2)|$ $\ge$ $3$, then $\M_1$ $\cup$ $\M_2$ is also Maxwell-rigid.
\end{enumerate}

\end{lem}
\begin{proof}
\begin{enumerate}[(a)]
\item Let $\N_1$ be a Maxwell-independent subgraph of $\M_1$ with $3|V(\M_1)$ $-$ $6|$ edges and $\N_2$ be a Maxwell-independent subgraph of $\M_2$ with $3|V(\M_2)$ $-$ $6|$ edges. We show next that $\N_1$ $\cup$ $\N_2$ is Maxwell-independent.

 Suppose $\N_1$ $\cup$ $\N_2$ is Maxwell-dependent. Then there exists $\N^\prime \subseteq$ $\N_1$ $\cup$ $\N_2$ such that $\N^\prime$ has Maxwell count less than $6$. Since both $\N_1$ and $\N_2$ are Maxwell-independent, it is clear that $\N^\prime \nsubseteq \N_1$ and $\N^\prime \nsubseteq \N_2$. Let $\N^\prime=\N_1^\prime \cup \N_2^\prime$ such that $\N_1^\prime \subseteq \N_1$ and $\N_2^\prime \subseteq \N_2$. Then $\N_1^\prime$ and $\N_2^\prime$ both have Maxwell count at least 6. To make their union have Maxwell count less than 6, $\N_1^\prime$ and $\N_2^\prime$ must share at least two vertices. Since $V(\N_1) \cap V(\N_2)$ consists of two vertices $u$ and $v$, we know $V(\N_1^\prime)$ $\cap$ $V(\N_2^\prime)$ consists of at most two vertices $u$ and $v$.

Since $\{u, v\}\notin E(\N_1)\cup E(\N_2)$, it can be seen that in order to make $\N^\prime$ of Maxwell count less than $6$, at least one of $\N_1^\prime$ and $\N_2^\prime$ will have Maxwell count less than $6$, which together with the fact that $\N_1^\prime \subseteq \N_1$ and $\N_2^\prime \subseteq \N_2$ violates Maxwell-independence of $\N_1$ or $\N_2$. Hence $\N_1$ $\cup$ $\N_2$ is Maxwell-independent. Notice that $\N_1$ $\cup$ $\N_2$ has enough edges to be Maxwell-rigid and thus $\M_1$ $\cup$ $\M_2$ is also Maxwell-rigid.

\item Since $\M$ is Maxwell-independent, we know both $\M_1 \cup \M_2$  and $\M_1 \cap \M_2$ are Maxwell-independent. Then can we calculate the Maxwell count of $\M_1 \cup \M_2$ as follows. We know (1) $\M_1$ and $\M_2$ each have Maxwell count $6$ and (2) $\M_1$ $\cap$ $\M_2$ has Maxwell count at least $6$ since $\M_1$ $\cap$ $\M_2$ is Maxwell-independent and has at least $3$ vertices. Thus the Maxwell count of $\M_1$ $\cup$ $\M_2$ is at most $6$. Together with the fact that $\M_1$ $\cup$ $\M_2$ is Maxwell-independent, we know $\M_1$ $\cup$ $\M_2$ is Maxwell-rigid.
\end{enumerate}
\end{proof}

This following proposition gives a useful property of a cover of a Maxwell-independent subgraph by vertex-maximal components.
\begin{obs}\label{obs:vmmr}
Let $\M$ be a Maxwell-independent graph. Let $\X=\{e_1, \ldots, e_k,$ $\M_1, \M_2, \ldots, \M_m\}$ be a cover of $\M$ by vertex-maximal components, where $e_1, \ldots, e_k$ are edge components and $\M_1, \M_2, \ldots, \M_m$ are non-trivial components. Then
\begin{enumerate}[(a)]
\item\label{obs:vmmr1} $\X$ is a 2-thin cover of $\M$.
\item\label{obs:vmmr2} $\X$ is strong 2-thin.

\end{enumerate}
\end{obs}
\begin{proof}
\begin{enumerate}[(a)]
\item Edge components do not affect the cover being 2-thin or not. 
Let $\M_i$ and $\M_j$ be two non-trivial vertex-maximal components in $\M$. Suppose $\M_i$ and $\M_j$ share at least $3$ vertices. Then from Lemma \ref{lem:join}(\ref{lem:join2}), we know $\M_i$ $\cup$ $\M_j$ is Maxwell-rigid, violating the fact that $\M_i$ and $\M_j$ are vertex-maximal components.

\item  Again, edge components do not affect the cover being strong 2-thin or not. Let $\M_i$ and $\M_j$ be two non-trival vertex-maximal components in $\M$. If $\M_i$ and $\M_j$ share two vertices but do not share an edge, then from Lemma \ref{lem:join}(\ref{lem:join1}), $\M_i$ $\cup$ $\M_j$ is Maxwell-rigid, which violates the vertex-maximal property of $\M_i$ and $\M_j$.



\end{enumerate}
\end{proof}

\medskip\noindent
Next we prove a lemma about the structure of a 2-thin cover of a Maxwell-independent graph.
We first need the following definition of {\em 2-thin component graph}.

\begin{dfn}\label{dfn:componentGraph}
Given graph $G=(V, E)$, let $\X=$ $\{G_1, G_2,$ $ \ldots,$ $ G_m\}$ be a 2-thin cover of $G$ by components of $G$. The {\em 2-thin component graph} $\C_{\X}$ of $G$ ({\em component graph} for short) is defined as follows. $V(\C_{\X}) $ $=$ $V_\text{component}(\C_{\X})$ $\cup $ $V_\text{edge}(\C_{\X})$, where $V_\text{component}(\C_{\X})$ consists of {\em component nodes} $C_{G_i}$, one for each component $G_i$ in $\X$; and $V_\text{edge}(\C_{\X})$ consists of {\em edge nodes} $C_e$, one for each edge $e$ shared by at least two components in $\X$. The edges in $E(\C_{\X})$ are of the form $(C_{G_i}, C_e)$, where $C_{G_i}$ $\in$ $V_\text{component}(\C_{\X})$, $C_e$ $\in$ $V_\text{edge}(\C_{\X})$, and $e\in E$ is a shared edge of $G_i$.
\end{dfn}
\noindent
Figure \ref{compG} shows how to obtain a 2-thin component graph from a graph and a cover by its vertex-maximal components.

\medskip
\noindent
Note that components sharing only vertices are non-adjacent in the component graph. Edge components have degree zero and become disconnected nodes in the component graph. See Figure \ref{compG}.

\begin{center}
\begin{figure}[!h]
\begin{center}
\includegraphics[width=.5\textwidth]{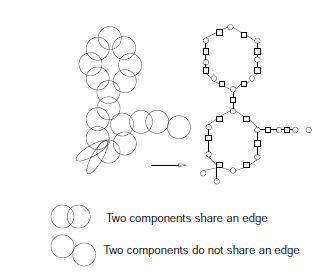}
\end{center}
\caption{The figure on the left represents the vertex-maximal components of a graph. On the right side is its 2-thin component graph, where circles represent component nodes and squares represent edge nodes. Note that the 2-thin component graph may not be connected.}\label{compG}
\end{figure}
\end{center}

\noindent
Lemma \ref{lem:comb}(\ref{thm:9tree}) below states an important property of component graphs of Maxwell-independent graphs. Specifically,  these 2-thin component graphs generalize the concept of partial $m$-trees (also called {\em tree-width $m$ graphs}) and Henneberg constructions \cite{graver:servatius:rigidityBook:1993}, which we define below.

\medskip
\noindent
\begin{dfn}\label{dfn:comppartialmTree}
Let $m$ be a positive integer. Then a 2-thin component graph is called a {\em generalized partial $m$-tree} if it can be reduced to an empty graph by a sequence of the 
following two operations: (i) removal of a component node of degree at most $m$ and 
(ii) removal of an edge node of degree one. 
\end{dfn}

Now we are ready to state the lemma. 
\begin{lem}\label{lem:comb} If $\M$ is a Maxwell-independent graph and $\X$ is a 2-thin cover of $\M$ by components of $\M$, then
\begin{enumerate}[(a)]
\item \label{thm:degree}
the component nodes of any subgraph of the 2-thin component graph $\C_{\X}$ have average degree strictly less than $4$.
 \item \label{thm:9tree}
any subgraph of the 2-thin component graph $\C_{\X}$ of $\M$ is a generalized partial $3$-tree.
 \end{enumerate}
\end{lem}
\begin{proof}
\begin{enumerate}[(a)]
\item\label{lem:comb1} First we remove all edge components of $\M$ and show the remainder of the component graph has average degree $<4$.

Let $\K_{\X}$ be any subgraph of the 2-thin component graph $\C_{\X}$. Let $K$ denote $\K_{\X}$'s corresponding subgraph in $\M$.
Let $\{\M_1, \ldots, \M_n\}$ be $\X$ restricted to $K$. Let $V_i$ and $E_i$ be the shared vertex and shared edge sets of component $\M_i$ of $K$, i.e., $V_i$ and $E_i$ are shared by other components $\M_j$ of $K$. Let $V_s$ and $E_s$ be the entire sets of such shared vertices and shared edges in $K$. 
Let $n_e$ and $n_v$ denote the number of components $\M_i$ of $K$ that share $e$ and $v$ respectively. Since the Maxwell count of each $\M_i$ is $6$ (they are all non-trivial), the Maxwell count of $K$ can be calculated as follows:
\begin{equation*}
\sum_{i} 6 - 3\sum\limits_{v\in V_s}n_v +\sum\limits_{e\in E_s}n_e + 3|V_s|-|E_s|=\sum_{i}(6- 3|V_i|+|E_i|)+ 3|V_s|-|E_s|
\end{equation*}

Suppose the Maxwell count of $K$ is $\geq 6$. We have
\begin{equation}\label{eqn:MC}
6n - 6 \geq 3\sum\limits_{i} |V_i| -\sum\limits_{i} |E_i| -3|V_s|+|E_s|
\end{equation}

Consider any shared vertex $v$ in $V_s$. Denote by $C_v \subseteq \{1,\ldots,n\}$ the set of indices of components containing $v$. In this proof, since the context is clear, we refer to $\M_j$, $j\in C_v$ as a component containing $v$. The collection of all $n_v$ components of $K$ meeting at $v$ forms a subgraph $C$. Since $K$ is Maxwell-independent, $C$ is also Maxwell-independent. Let $w_v^{j}$ be the number of shared edges incident at $v$ in component $\M_j$ and $s_v$ be the number of shared edges that are incident at $v$. Then the Maxwell count of $C$ can be computed as follows:
\begin{itemize}
\item there are $n_v$ components, which contributes $6 n_v$;
\item $v$ is shared by $n_v$ components, and the contribution is $ - (3n_v -3)$;
\item each shared edge in a component $\M_{j}$ contributes $1$ to the Maxwell count, and altogether the shared edges contribute $(\sum_{j \in  C_v} w_v^{j}) - s_v$
\item for each shared edge $e=\{u, v\}$, vertex $u$ contributes $ -3[(\sum_{j \in  C_v} w_v^{j}) - s_v]$
\item for the set of shared vertices that are not part of any shared edge in $C$, their contribution is $-\Delta$ for a non-negative number $\Delta$;
\end{itemize}
Thus the Maxwell count of $C$ is:
\begin{equation*}
3n_v - 2[(\sum\limits_{j \in  C_v} w_v^{j}) - s_v] +3 - \Delta
\end{equation*}
Since $C$ is Maxwell-independent, we know:
\begin{eqnarray*}
3n_v - 2 [(\sum\limits_{j \in  C_v} w_v^{j}) - s_v] +3 - \Delta& \geq& 6\\
\end{eqnarray*}
Since $\Delta $ $\ge 0$, we know
\begin{eqnarray*}
3n_v - 2 [(\sum\limits_{j \in  C_v} w_v^{j}) - s_v] &\ge& 3
\end{eqnarray*}
Summing over all shared vertices in $V_s$, we have:
\begin{equation*}
3\sum\limits_{v\in V_s}n_v - 2 \sum\limits_{v\in V_s}[(\sum\limits_{j \in  C_v} w_v^{j}) - s_v] \geq 3|V_s|
\end{equation*}
Since $\sum\limits_{v\in V_s}n_v  =\sum\limits_i |V_i|$, $\sum\limits_{v\in V_s}(\sum\limits_{j \in  C_v} w_v^{j})= 2\sum\limits_i |E_i|$ and $\sum\limits_{v\in V_s} s_v =2 |E_s|$, we know
\begin{equation*}
3\sum\limits_{i} |V_i| -4 \sum\limits_{i} |E_i| -3|V_s|+ 4|E_s| \geq 0
\end{equation*}
Plugging into \eqref{eqn:MC}, we have:
\begin{eqnarray*}
6n - 6 &\geq& 3\sum\limits_{i} |V_i| -\sum_{i} |E_i| -3|V_s|+|E_s|\\
&\geq &3\sum\limits_{i} |V_i| -4 \sum_{i} |E_i| -3|V_s|+ 4|E_s|\\
&& + 3( \sum\limits_{i} |E_i| -|E_s|)\\
&\geq&3( \sum\limits_{i} |E_i| -|E_s|)
\end{eqnarray*}
Since $|E_s| \leq \frac{1}{2}\sum\limits_{i} |E_i|$, we have:
\begin{equation*}
6n - 6 \geq \frac{3}{2} \sum\limits_{i} |E_i|
\end{equation*}
We now observe that the component nodes in $\K_{\X}$ must have average degree strictly less than $4$. Otherwise, $\sum_{i} |E_i| \geq 4n$, leading to a contradiction that
\begin{equation*}
6n - 6 \geq \frac{3}{2} 4n = 6n.
\end{equation*}
This proves (\ref{lem:comb1}).

\item This follows immediately from (\ref{lem:comb1}).

\end{enumerate}

\end{proof}

Next we establish a condition on the cover of a Maxwell-independent graph such that the IE$_\text{rank}$ count in Definition \ref{dfn:IE} gives an upper bound on rank$(\M)$. This condition is called an {\em independence assignment}.

\begin{dfn}\label{dfn:assign}
Given a graph $G=(V, E)$ and a cover $\X=\{G_1, \ldots, G_m\}$ of $G$, we say $(G, \X)$ has an {\em independence assignment} $[\I$; $\{\I_1, \ldots,\I_m\}]$, if there is an independent set $\I$ of $G$ and maximal independent set
$\I_i$ of each of the $G_i$'s, such that $\I$ {\em restricted to} $G_i$, (denoted $\I|_i$), is contained in $\I_i$ and for any $e\in \H(\X)$, $e$
is missing from at most one of the $\I_i$'s whose corresponding $G_i$ contains $e$. When $\X$ is clear, we also say there is an independence assignment for $G$. 
\end{dfn}

\medskip
\noindent
The next lemma shows the existence of an independence assignment for Maxwell-independent graphs.

\begin{lem}\label{lem:assign}
If $\M$ is Maxwell-independent and $\X$ is a 2-thin cover of $\M$ by components of $\M$, then $(\M, \X)$ has an independence assignment.
\end{lem}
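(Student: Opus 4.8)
The plan is to construct the independence assignment explicitly by an induction that peels the component graph according to its generalized partial $3$-tree structure, which is exactly what Theorem \ref{thm:comb}(\ref{thm:9tree}) supplies. First I would fix a \emph{build order} on the components. Since $\C_{\M}$ is a generalized partial $3$-tree, repeatedly deleting component nodes of degree at most $3$ (together with the edge nodes that drop to degree $1$) exhausts all of $\C_{\M}$; reversing this peeling yields an order $B_1, B_2, \ldots, B_n$ of the components of $\X$ with the crucial property that each $B_k$ shares at most three edges with $H_{k-1} := B_1 \cup \cdots \cup B_{k-1}$. Write $S_k := E(B_k) \cap E(H_{k-1})$ for this set of at most three \emph{backward} shared edges. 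For every $e \in \H(\X)$ I would designate as its unique \emph{dropper} the earliest component in the build order that contains $e$; then a component $B_k$ fails to be the dropper of $e$ precisely when $e \in S_k$, so the set of shared edges $B_k$ is \emph{required} to keep in its basis is exactly $S_k$, of size at most three.

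The engine of the construction is the standard fact that in the generic $3$-dimensional rigidity matroid the smallest circuit is $K_5$, so any set of at most nine edges, and in particular each $S_k$, is independent. I would then build, by induction on $k$, a global independent set $\I^{(k)}$ of $H_k$ together with local bases $\I_1, \ldots, \I_k$ maintaining the invariants: (i) $\I^{(k)}$ is a basis of $H_k$; (ii) $\I^{(k)} \cap B_j \subseteq \I_j$ for every $j \le k$; and (iii) every shared edge among $B_1, \ldots, B_k$ is absent from at most one $\I_j$. At step $k$ I would first extend the independent set $S_k$ to a basis $\I_k$ of $B_k$ (possible since $S_k$ is independent), which forces the required backward edges into $\I_k$; then I would update the global set by adjoining a subset $T_k$ of the edges of $\I_k$ lying outside $H_{k-1}$, chosen maximal subject to $\I^{(k)} := \I^{(k-1)} \cup T_k$ being independent.

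Checking the invariants is then routine matroid bookkeeping. Because every edge of $T_k$ lies outside $H_{k-1}$ while $B_j \subseteq H_{k-1}$ for $j < k$, the restriction $\I^{(k)} \cap B_j$ is unchanged, preserving (ii) for the old components; for $B_k$ itself one has $\I^{(k)} \cap B_k = (\I^{(k-1)} \cap B_k) \cup T_k \subseteq S_k \cup T_k \subseteq \I_k$, giving (ii) for $k$. Maximality of $T_k$ shows each edge of $\I_k$ outside $H_{k-1}$ lies in the span of $\I^{(k)}$, and $S_k \subseteq H_{k-1}$ is already spanned, so $\I^{(k)}$ spans $B_k$ and hence all of $H_k$; together with independence this gives (i). Invariant (iii) holds because the dropper designation makes every non-dropper component keep the edge (it then lies in that component's $S$-set, hence in its basis), so an edge can be missing only from its single dropper. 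Taking $\I := \I^{(n)}$ and the accumulated $\I_1,\ldots,\I_n$ produces the required independence assignment $[\I;\{\I_1,\ldots,\I_n\}]$ of Definition \ref{dfn:assign}; as a bonus, (i) shows $\I$ is in fact a maximal (true) independent set of $\M$, matching the stated goal that the assignment contain a maximal independent set.

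The step I expect to be the main obstacle is reconciling a \emph{single} global independent set $\I$ with all the local bases simultaneously, i.e.\ condition (ii) together with (iii). Naively fixing the $\I_i$ first and only afterwards seeking a compatible $\I$ fails, since a shared edge dropped by its dropper yet kept by another component would be forced both inside and outside $\I$. The resolution, and the reason the $3$-tree structure of Theorem \ref{thm:comb}(\ref{thm:9tree}) is essential, is to grow $\I$ and the $\I_i$ together along the build order: this confines each newly attached component's interaction with the past to at most three backward edges $S_k$, which are automatically independent and can therefore always be absorbed into both the local basis $\I_k$ and the extending global basis $\I^{(k)}$ without any conflict.
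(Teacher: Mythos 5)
Your construction is correct and is essentially the paper's own proof: both reverse the peeling order of the generalized partial $3$-tree from Theorem \ref{thm:comb}(\ref{thm:9tree}), observe that the at most $3\le 9<10$ backward shared edges of each newly attached component are independent (smallest $3$D circuit being $K_5$), extend them to a local basis, and greedily adjoin a maximal independent subset of the new edges to the global independent set. Your ``dropper'' bookkeeping just makes explicit the verification of the ``missing from at most one $\I_i$'' condition that the paper leaves implicit, so no further changes are needed.
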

\begin{proof} (of Lemma \ref{lem:assign}).
In fact, we can construct an independence assignment if the 2-thin component graph of $\M$ is a generalized partial $9$-tree.
From Lemma \ref{lem:comb}(\ref{thm:9tree}), we know that any subgraph of the 2-thin component graph $\C_{\X}$ of $\M$ is a generalized partial
$3$-tree, which is automatically a generalized partial $9$-tree. Let $\M_1, \M_2, \ldots$ $\M_n$ be the component nodes of $\M$ listed in reverse order from the removal order in Definition \ref{dfn:comppartialmTree}. We use induction to prove that there is always an independence assignment for $(\M, \X)$.

If $\X$ has only one component, it is clear that we can find an independence assignment.

Suppose there is an independence assignment $[\I^{k}$; $\I^{k}_i$ $1\leq i\leq k]$ for a subgraph $\C_{\X}^k$ of $\C_{\X}$ containing the component nodes $\M_1, \M_2, \ldots, \M_k$. After adding $\M_{k+1}$ to form $\C_{\X}^{k+1}$, we need to find $\I^{k+1}$, which is a maximal independent set of $\bigcup\limits_{i=1}^{k+1}\M_i$, and $\I^{k+1}_i$ for $1\le i\le k+1$ such that $[\I^{k+1}$; $\I^{k+1}_i$ $1\leq i\leq k+1]$ is an independence assignment.

First we take $\I^{k+1}_i := \I^{k}_i$ for $1 \leq i \leq k$ and let $\S$ be the set of edges of $\M_{k+1}$ that are shared by other components. Since $|\S| \leq 9$, $\S$ is independent for $d=3$, because for $d=3$, a minimum-size graph that is not independent will have at least $10$ edges. Thus we can extend $\S$ to a maximal independent set $\I^{k+1}_{k+1}$ of $\M_{k+1}$. Now let $\I^{k+1}$ $:=\I^{k}$ $\cup$ $\I^{k+1}_{k+1}$, then (1) $\I^{k+1}$ spans all edges in $\bigcup\limits_{i=1}^{k+1}\M_i$, and (2) every edge $e$ in $\I^{k+1}$ that is shared by at least two components in $\M_1, \M_2, \ldots$ $\M_{k+1}$ is missing in at most $1$ of the $\I^{k+1}_i$'s sharing $e$, since (a) $[\I^{k}$; $\I^{k}_i$ $1\leq i\leq k]$ is an independence assignment for $\C_{\X}^k$ and (b) $\I^{k+1}_{k+1}$ contains all shared edges of $\M_{k+1}$. If $\I^{k+1}$ is already independent, we have our independence assignment. Otherwise we can remove a minimum number of edges from $\I^{k+1}$ until it is independent.

\end{proof}

The following theorem gives an alternative combinatorial upper bound on rank of rigidity matroid of Maxwell-independent graphs. This also completes the proof for Theorem \ref{thm:main}.

\begin{thm}\label{thm:weakrankIE}
Let $\M$ be a Maxwell-independent graph and $\X=\{e_1, \ldots, e_k$, $\M_1, \M_2, \ldots, \M_m\}$ be a 2-thin cover of $\M$ by components of $\M$. Then $\sum_{i=1}^m rank(\M_i) - \sum_{\{u, v\} \in \H(\X)\cap E(\M)}(n_{\{u, v\}}-1) +k \geq \text{rank}(\M)$.

\end{thm}
\begin{proof}

When $\X$ is a 2-thin cover, we can apply Lemma \ref{lem:assign} and obtain that $(\M, \X)$ has an independence assignment.

First we remove all edge components of $\M$ to obtain a new graph $\M^\prime$. Now the existence of an independence assignment directly implies that $\sum_{i=1}^m$ $rank(\M_i)$ $-$ $\sum_{\{u, v\} \in \H(\X)\cap E(\M)}(n_{\{u, v\}}-1)$ $\geq$ $\text{rank}(\M^\prime)$.

Next we consider the edge components $e_1, \ldots, e_k$. If we add the contributions of all of them to both sides of the inequality, the left hand side becomes $\sum_{i=1}^m rank(\M_i) - \sum_{\{u, v\} \in \H(\X)\cap E(\M)}(n_{\{u, v\}}-1) +k$, and the right hand side becomes  $ |\I_{\M^\prime}| + k$, which is at least the rank of $\M$, since $E(\M)$ $=$ $E(\M^\prime)$ $\cup\{e_1, \ldots, e_k\}$.

\end{proof}

\section{Alternative Upper Bounds Using IE Counts}\label{betterbound}
\subsection{Relation to Known Bounds and Conjectures Using IE Counts}\label{sec:knownbounds}
\noindent Decomposition of graphs into covers is a natural way of approaching a combinatorial characterization of $3$-dimensional rigidity. So far, the
inclusion-exclusion(IE) count method for covers has been used by many in the
literature (see \cite{crapo:structuralRigidity:1979, sitharam:zhou:tractableADG:2004, andrewThesis, bib:survey, lovasz:yemini, JacksonJordansparse:2005, JacksonJordanrank:2006}). The most explored decompositions are the 2-thin covers.

\medskip\noindent
We defined two types of rank IE counts in Definition \ref{dfn:IE}, with IE$_\text{rank}$ being used in the proof of Theorem \ref{thm:main}. Our Theorem \ref{thm:propermaximal} below in Section \ref{sec:proper}, will show that for a {\em specific, not necessarily independent} cover, a slightly different inclusion-exclusion count is equal to IE$_\text{rank}$ count, which in turn gives a rank upper bound for Maxwell-independent graphs.

\medskip\noindent
Besides the IE$_\text{rank}$ count, other IE counts have also been explored in the aforementioned literature.
In 1983, Dress et al \cite{bib:Dress,bib:Tay84} conjectured that the minimum of the IE$_\text{full}$ count taken over all 2-thin covers is an upper bound on the rank of the 3-dimensional generic rigidity matroid. However,
this conjecture was disproved for general graphs by Jackson and Jord\'an in
\cite{Jackson03thedress}. 

\medskip\noindent
Although Dress' conjecture is false, the IE$_\text{full}$ count can be an upper bound of the rank if the cover is special: it is shown in \cite{JacksonJordanrank:2006} that the minimum of the IE$_\text{full}$ count taken over all {\em independent} 2-thin covers is an upper bound on the rank. Here, an {\em independent 2-thin cover} $\X$ is one for which the edge set given by the pairs in the shared part $\H(\X)$ is independent. It is also shown that to achieve the upper bound, the covers need not be independent, but can be obtained as iterated, or recursive version of independent covers. 

\medskip\noindent
We have no examples where our bound in Theorem \ref{thm:propermaximal} is better than the above mentioned bound from \cite{JacksonJordanrank:2006}, which was conjectured to be tight when restricted to non-rigid graphs and covers of size at least $2$. Hence any such examples would be counterexamples to their conjecture.
However, our formula provides an alternative way of computing a rank upper bound using not necessarily independent covers.

\medskip\noindent
In Section \ref{sec:nonMaxwell}, we use the same IE$_{\text{full}}$ count
over another special cover, which is a {\em specific non-iterated, non-independent} cover, to obtain rank bounds on Maxwell-dependent graphs. 
Again, we have no examples where our
bound is better than the above mentioned bound in \cite{JacksonJordanrank:2006}, which was conjectured to be tight. Hence any such examples would be counterexamples to their
conjecture. Our bound gives an alternative method using a specific,
non-iterated, not necessarily independent cover by (proper) vertex-maximal components. However, the catch is that these covers may not exist for general graphs.

\subsection{Alternative Upper Bounds for Maxwell-Independent Graphs}\label{sec:proper}
In this section, we give alternative combinatorial bounds on the rank of the generic rigidity matroid of Maxwell-independent graphs in $3$ dimensions.

Notice that if $\M$ is a Maxwell-independent graph with a cover $\X$ $=$ $\{e_1, \ldots, e_k,$ $\M_1, \M_2, \ldots, \M_m\}$ by vertex-maximal components, then $\H(\X)$ $=$ $\H(\X) \cap E(\M)$ and thus $\sum_{i=1}^m$ $rank(\M_i)$ $-$ $\sum_{{\{u, v\}} \in \H(\X)}$ $(n_{\{u, v\}} -1)$ $+k$ $=$ IE$_\text{rank}(\X)$ $\ge$ rank$(\M)$.

However, when a graph $\M$ is Maxwell-rigid, there is a single vertex-maximal component namely $\M$ itself, so the above bound is uninteresting. In this case, we use the cover of $\M$ by ``proper'' vertex-maximal components:

\begin{dfn}\label{dfn:properMax}
Given graph $G=(V, E)$, an induced subgraph is \emph{proper vertex-maximal, Maxwell-rigid} if it is Maxwell-rigid and the only graph that properly contains this
subgraph and is Maxwell-rigid is $G$ itself.
\end{dfn}

\noindent Since the collection of proper vertex-maximal components may not be a 2-thin cover even for Maxwell-independent graphs, Theorem \ref{thm:weakrankIE} does not directly apply. The following theorem deals with cases that are relatively minor variations of Theorem \ref{thm:weakrankIE}.

\begin{thm}\label{thm:propermaximal}
Let $\M$ be a Maxwell-independent graph and $\X$ $=$ $\{e_1, \ldots, e_k,$ $\M_1, \M_2, \ldots, \M_m\}$ be a cover of $\M$ by proper vertex-maximal components. Then we have:
\begin{enumerate}
\item If $\X$ is strong $2$-thin, then $\sum_{i=1}^m rank(\M_i)$ $-$ $\sum_{{\{u, v\}} \in \H(\X)}$ $(n_{\{u, v\}} -1)$ $+k$ $=$ IE$_\text{rank}(\X)$ $\ge$ rank$(\M)$.
\item If $\X$ is $2$-thin but not strong $2$-thin, $\X$ consists entirely of two non-trivial components $\M_i$ and $\M_j$ in $\X$ s.t. $\M$ $=$ $\M_i$ $\cup$ $\M_j$ and hence $\text{rank}(\M_i) +\text{rank}(\M_j)$ $-$ $\text{rank}(\M_i$ $\cap$ $\M_j)$ $\geq$ rank$(\M)$.
\item Otherwise, there exist two non-trivial components $\M_i$ and $\M_j$ in $\X$, s.t. $\M$ $=$ $\M_i$ $\cup$ $\M_j$ and hence $\text{rank}(\M_i) +\text{rank}(\M_j)$ $-$ $\text{rank}(\M_i$ $\cap$ $\M_j)$ $\geq$ rank$(\M)$.
\end{enumerate}

\end{thm}
\begin{proof}
\begin{enumerate}
\item When $\X$ is strong $2$-thin, we know $\H(\X)$ $=$ $\H(\X) \cap E(\M)$ and thus $\sum_{i=1}^m rank(\M_i)$ $-$ $\sum_{{\{u, v\}} \in \H(\X)}$ $(n_{\{u, v\}} -1)$ $+k$ $=$ IE$_\text{rank}(\X)$. Then it follows from Theorem \ref{thm:weakrankIE} that IE$_\text{rank}(\X)$ $\geq$ rank$(\M)$.

\item When $\X$ is $2$-thin but not strong $2$-thin, we know there exist two proper vertex-maximal components $\M_i$ and $\M_j$, s.t. $\M_i$ $\cap$ $\M_j$ has two vertices but no edge. From Lemma \ref{lem:join}(\ref{lem:join1}), we know $\M_i$ $\cup$ $\M_j$ is Maxwell-rigid. Since $\M_i$ and $\M_j$ are both proper vertex-maximal, we know $V(\M)$ $= $ $V(\M_i)$ $\cup$ $V(\M_j)$. Since $\M$ is Maxwell-independent, we know $E(\M)$ $= $ $E(\M_i)$ $\cup$ $E(\M_j)$. Since the cover is $2$-thin, no other non-trivial vertex-maximal component can exist. Hence $\M_i$ and $\M_j$ are the only two non-trivial components in $\X$ and it follows that rank$(\M_i$ $\cap$ $\M_j)$ $=0$ and hence $\text{rank}(\M_i) +\text{rank}(\M_j)$ $- \text{rank}$ $(\M_i$ $\cap$ $\M_j)$  $\geq$ rank$(\M)$.

\item When $\X$ is not $2$-thin, i.e., there exist $\M_i$ and $\M_j$ such that their intersection has at least $3$ vertices. From Lemma \ref{lem:join}(\ref{lem:join2}), we know the union of $\M_i$ and $\M_j$ is also Maxwell-rigid. Since $\M_i$ and $\M_j$ are both proper vertex-maximal, we know $V(\M)$ $= $ $V(\M_i)$ $\cup$ $V(\M_j)$. Since $\M$ is Maxwell-independent, we know $E(\M)$ $= $ $E(\M_i)$ $\cup$ $E(\M_j)$.

It remains to show that $\text{rank}(\M_i) +\text{rank}(\M_j)$ $- \text{rank}$ $(\M_i$ $\cap$ $\M_j)$ $\geq$ rank$(\M_i\cup \M_j)$. To show this, we can start from a maximal independent set $\I$ of $\M_i$ $\cap$ $\M_j$, and expand it to maximal independent sets $\I_i$ of $\M_i$ and $\I_j$ of $\M_j$. It is clear that $\I_i \cup \I_j$ spans the graph $\M_i$ $\cup$ $\M_j$, and hence $\text{rank}(\M_i) +\text{rank}(\M_j)$ $- \text{rank}$ $(\M_i$ $\cap$ $\M_j)$ $=$ $|\I_i \cup \I_j|$ $\geq$ rank$(\M_i\cup \M_j)$.

\end{enumerate}

\end{proof}

\subsection{Removing the Maxwell-Independence Condition}\label{sec:nonMaxwell}
We now give rank bounds for Maxwell-dependent graphs using the IE$_{\text{full}}$ count. We start with the following simple but useful property of edge-sharing, Maxwell-rigid subgraphs.

\begin{lem}\label{obs:maximal}
Given graph $G=(V, E)$, let $G_1$ and $G_2$ be two subgraphs of $G$ s.t. $G_0$ $=$ $G_1\cap G_2$ consists of two vertices $u$, $v$ and an edge $e=\{u, v\}$.
\begin{enumerate}[(a)]
\item\label{obs:maximal1} If $G_1$ is a vertex-maximal component of $G$ and there is a  Maxwell-independent subgraph $\M_1$ of $G_1$ s.t. $|E(\M_1)|$ $=$ $3|V(\M_1)|-6$ and $e\not\in \M_1$, then every maximal Maxwell-independent subgraph of $G_2$ contains $e$.
\item\label{obs:maximal2} If $G_1$ is a proper vertex-maximal component of $G$ and there is a  Maxwell-independent subgraph $\M_1$ of $G_1$ s.t. $|E(\M_1)|$ $=$ $3|V(\M_1)|-6$ and $e\not\in \M_1$, then one of following holds:
(1) $V$ $=$ $V(G_1)$ $\cup$ $V(G_2)$, or
(2) every maximal Maxwell-independent subgraph of $G_2$ contains $e$.
\end{enumerate}
\end{lem}
\begin{proof}
\begin{enumerate}[(a)]
\item Suppose there is a Maxwell-independent subgraph $\M_2$ of $G_2$
such that $\{e\}$ $\cup$ $E(\M_2)$ is Maxwell-dependent. Then there must be a subgraph $\M_2^{\prime}$ of $\M_2$ such that $\M_2^{\prime}$ has Maxwell count $6$. Then it follows from Lemma \ref{lem:join}(\ref{lem:join1}) that $\M_2^{\prime}$ $\cup G_1$ is also Maxwell-rigid, a contradiction to the vertex-maximality of $G_1$.
\item Statement follows from (\ref{obs:maximal1}) and the proper vertex-maximality of $G_1$. 
\end{enumerate}
\end{proof}

Next we give two similar theorems with similar proofs. The first theorem, Theorem \ref{thm:complete2thin}, gives a rank bound for graphs for which the complete collection of vertex-maximal components forms a $2$-thin cover. The second, Theorem \ref{thm:properComplete2thin}, concerns proper vertex-maximal components.

\begin{thm}\label{thm:complete2thin}
For a graph $G=(V, E)$, if the complete collection $\X$ $=$ $\{e_1, \ldots,$ $e_k$, $G_1, G_2, \ldots, G_m\}$ of vertex-maximal components forms a 2-thin cover, then IE$_{\text{full}}(\X)$ is an upper bound on $\text{rank}(G)$, i.e., $$\sum\limits_{i=1}^m{(3|V(G_i)| - 6)} - \sum\limits_{\{u, v\} \in \H(\X)}(n_{\{u, v\}}
-1) + k \geq \text{rank}(G).$$ 
\end{thm}

\begin{proof}
We first consider the case where there are no edge components.

First, we show that the cover $\X$ is strong 2-thin. Suppose not, then there exists $\{u, v\}\in \H(\X)$ such that $\{u, v\}\notin E$. Suppose further that $G_i$ and $G_j$ both contain $u$ and $v$. From Lemma \ref{lem:join}(\ref{lem:join1}), we know $G_i\cup G_j$ is Maxwell-rigid, contradicting the fact that $G_i$ and $G_j$ are vertex-maximal, Maxwell-rigid.  Hence the cover $\X$ is strong 2-thin and IE$_{\text{full}}(\X)$ can be rewritten as $\sum\limits_{i=1}^m (3|V(G_i)| - 6)
-\sum\limits_{\{u, v\} \in \H(\X)\cap E}(n_{\{u, v\}}
-1)$.

We need the following claim (which is also used for proving Theorem \ref{thm:properComplete2thin}).

\medskip
\begin{clm}\label{clm:2thin}
For a graph $G$ $=$ $(V, E)$, if the complete collection $\X$ $=$ $\{G_1$, $G_2$, $\ldots$, $G_m\}$ of (proper) vertex-maximal components forms a strong 2-thin cover, then there is a maximal Maxwell-independent subgraph $\M$ of $G$ s.t. IE$_{\text{full}}(\X)$ $=$ $|E(\M)|$ and hence IE$_{\text{full}}(\X)$ $\ge$ $\text{rank}(G)$.
\end{clm}

\begin{proof}
We show the claim for the case where $\X$ consists of vertex-maximal components. However, along the way, we point out the slight differences for the case where $\X$ consists of proper vertex-maximal components, making the claim applicable also to Theorem \ref{thm:properComplete2thin}.

We first construct a subgraph $\M$ $\subseteq G$ with $|E(\M)|$ equal to IE$_{\text{full}}(\X)$ as follows. For $1\leq i \leq m$, denote by $\N_i$ a {\em maximum} sized Maxwell-independent subgraph of $G_i$. Then from Lemma \ref{obs:maximal}(\ref{obs:maximal1}), we know that for any edge $e\in \H(\X)$, there is at most one $\N_i$, such that $ \{e\} \cup E(\N_i)$ is Maxwell-dependent. ({\em Note:} from Lemma \ref{obs:maximal}(\ref{obs:maximal2}), even if $\X$ is a cover by complete collection of proper vertex-maximal components, when there are no two components $G_i$ and $G_j$ s.t. $V$ $=$ $V(G_i)$ $\cup$ $V(G_j)$, it still holds that for any edge $e\in \H(\X)$, there is at most one $\N_i$, such that $ \{e\} \cup E(\N_i)$ is Maxwell-dependent.)


\medskip
\noindent Thus, edges of component $G_i$ can be divided into four parts:
\begin{itemize}
\item $\P_1^i$: the set of edges $e$ in $\H(\X)\cap E(\N_i)$ that are present in each $E(\N_j)$ for which $G_j$ contains $e$;
\item $\P_2^i$: the set of edges $e$ in $\H(\X)\cap E(\N_i)$ for which there is exactly one $\N_j$ where $e\in$ $G_j\setminus \N_j$, i.e., $ \{e\} \cup E(\N_j)$ is Maxwell-dependent;
\item $\P_3^i$: the set of edges $e$ in $\H(\X)\setminus E(\N_i)$, and present in all other $\N_j$'s, where $G_j$ contains $e$.
\item $\P_4^i$: $E(G_i)\setminus \H(\X)$.
\end{itemize}

Let $\P_k= \bigcup\limits_{i} \P_k^i$. Now we construct $\M$ as follows. First, let $V(\M):= V(G)$. Then we construct the edge set $E(\M)$ by removing all edges in $\P_2$ and $\P_3$ from $\bigcup\limits_{i=1}^m E(\N_i)$. Thus $\N_i$ $=$ $\M|_i$ $\cup$ $\P_2^i$, where $\M|_i$ denotes $\M$ restricted to $G_i$.

Now note that $|E(\M)| = \sum\limits_{i=1}^m (3|V(G_i)| - 6)
-\sum\limits_{{\{u, v\}}\in \P_1} (n_{\{u, v\}} -1)$ $-$ $\sum\limits_{{\{u, v\}}\in \P_2\cup \P_3}$ $(n_{\{u, v\}}$ $-1)$, which is exactly IE$_{\text{full}}(\X)$, since $\X$ is strong $2$-thin.
In the following we show that this number is at least rank$(G)$ by showing that $\M$ is a maximal Maxwell-independent subgraph of $G$ and using Theorem \ref{thm:main}.
\begin{enumerate}[(I)]

\item $\M$ is Maxwell-independent. Suppose not, then we can find a minimal subgraph $\M^\prime \subseteq \M$ that is Maxwell-dependent. Since $\M$ is picked in such a way that every $\M|_i$ is Maxwell-independent, we know $\M^\prime$ cannot be inside any $G_i$. Because $\M^\prime$ is minimal, we know there exists $\M^{\prime\prime} \subset \M^\prime$ that (1) contains all vertices of $\M^\prime$ and (2) is Maxwell-independent with Maxwell count $6$.
Then $\M^{\prime\prime}$ is a component that is not contained in any $G_i$, since $\M^\prime$ is not inside any $G_i$, and removing an edge from $\M^\prime$ does not make it inside any $G_i$ either. That is a contradiction to the fact that $G_1, \ldots, G_m$ is the {\em complete} collection of vertex-maximal components of $G$. ({\em Note:} this contradiction would hold even if $\X$ is a cover by complete collection of proper vertex-maximal components.)

\item $\M$ is a maximal Maxwell-independent subgraph of $G$. 
In order to show this, we first notice that for every $e\in \P_2^i$, every maximal Maxwell-independent subgraph $\N_i^\prime$ of $G_i$ contains $e$, which follows from the statements that (1) there exists a $G_j$ s.t $ \{e\} \cup E(\N_j)$ is Maxwell-dependent and (2) Lemma \ref{obs:maximal}(\ref{obs:maximal1}). ({\em Note:} from Lemma \ref{obs:maximal}(\ref{obs:maximal2}), even if $\X$ is a cover by complete collection of proper vertex-maximal components, when there are no two components $G_1$ and $G_2$ s.t. $V$ $=$ $V(G_1)$ $\cup$ $V(G_2)$, it still holds that for every $e\in \P_2^i$, every maximal Maxwell-independent subgraph of $G_i$ contains $e$.)


Suppose there is an edge $e \in E\setminus E(\M)$ such that $E(\M)\cup \{e\}$ is  Maxwell-independent. Then $(E(\M)\cup \{e\})|_i$ (which denotes $E(\M)\cup \{e\}$ restricted to $G_i$) is also Maxwell-independent. 
Since $e\in G_i$ for some $i$, we know $e\in \P_2^i, \P_3^i$ or $\P_4^i$. In fact every edge $\P_2^j$ for some $j$ is also in $\P_3^i$ for some $i$, without loss of generality, we choose a component $i$ such that $e\in \P_3^i$ or $\P_4^i$. 
Notice that there is an extension of $(E(\M)\cup \{e\})|_i$ into a maximal Maxwell-independent subgraph $\M_i^\prime$ of $G_i$, which must contain all edges in $\P_2^i$ as shown in the previous paragraph, i.e., $E(\M_i^\prime)$ contains $(E(\M)\cup \{e\})|_i$ $\cup$ $\P_2^i$. Since $\M|_i$ $\cup$ $\P_2^i$ $=$ $\N_i$, we know $E(\M_i^\prime)$ has size larger than $E(\N_i)$, which is a contradiction to the fact that $\N_i$ is a maximum sized Maxwell-independent subgraph of $G_i$. Hence $\M$ is maximal Maxwell-independent.

\end{enumerate}

\medskip\noindent
Thus we know $\M$ is a maximal Maxwell-independent set of $G$. From Theorem \ref{thm:main}, we know $|E(\M)| \geq \text{rank}(G)$. As noticed before, the IE$_{\text{full}}$ count of the cover $\X$ is equal to $|E(\M)|$, hence we have
$\sum\limits_{i=1}^m{(3|V(G_i)| - 6)}$ $-$ $\sum\limits_{\{u, v\} \in \H(\X)}$ $(n_{\{u, v\}} -1) \geq \text{rank}(G)$. 

\end{proof}

Returning to the proof of Theorem \ref{thm:complete2thin}, we first notice that Claim \ref{clm:2thin} completes the proof, when there are no edge components in the cover $\X$.

With edge components in the cover, notice that each edge component contributes $1$ to the left hand side but contributes at most $1$ to the right hand side. Thus the inequality still holds.

\end{proof}

The next theorem extends the bound in Theorem \ref{thm:complete2thin} to covers by proper vertex-maximal components.

\begin{thm}\label{thm:properComplete2thin}
For a graph $G=(V, E)$, if the complete collection $\X$ $=$ $\{e_1, \ldots$, $e_k$, $G_1, G_2, \ldots, G_m\}$ of proper vertex-maximal components forms a 2-thin cover, then the IE$_{\text{full}}$ count of the cover $\X$ is an upper bound on $\text{rank}(G)$, i.e., $$\sum\limits_{i=1}^m{(3|V(G_i)| - 6)}  -  \sum\limits_{{\{u, v\}} \in \H(\X)}(n_{\{u, v\}}-1) +k \geq \text{rank}(G).$$
\end{thm}

\begin{proof}
When $G$ is not Maxwell-rigid, the proof is the same as in Theorem \ref{thm:complete2thin}.

When $G$ is Maxwell-rigid, we first show the theorem for the case where there are no edge components. There are two further cases:
\begin{description}

\item[Case 1.] There exist two components $G_i$ and $G_j$ s.t. $V(G)$ $=$ $V(G_i)$ $\cup$ $V(G_j)$. In this case, all other non-trivial components in the cover can only be $K_3$ or $K_4$. For every edge $e$ in those components, we know (1) if $e\in$ $G_i\cup G_j$, then $e$ contributes to $0$ to both the left hand side and right hand side of the inequality; and (2) if $e\not\in$ $G_i\cup G_j$,  then $e$ contributes to $1$ to the left hand side, and $0$ or $1$ to the right hand side of the inequality. 

Thus if we can show that IE$_\text{full}$ count on $G_i\cup G_j$ is an upper bound on the rank of $G_i\cup G_j$, then the theorem holds. Note that IE$_\text{full}$ count on $G_i\cup G_j$ is equal to $3|V|-7$, and from the axiom $C5$ of abstract rigidity matroid (see \cite{graver:servatius:rigidityBook:1993}), we know $G_i\cup G_j$ is not rigid and thus rank$(G_i\cup G_j)$ is at most $3|V|-7$. Hence IE$_\text{full}$ count on $G_i\cup G_j$ is an upper bound on the rank of $G_i\cup G_j$.

\item[Case 2.] For any two components $G_i$ and $G_j$, we have $V(G)$ $\neq$ $V(G_i)$ $\cup$ $V(G_j)$. In this case, we know the cover is strong $2$-thin, since otherwise, there exist two components $G_1$ and $G_2$ whose intersection is a pair of vertices without an edge. From Lemma \ref{lem:join}(\ref{lem:join1}), we know $G_1\cup G_2$ is Maxwell-rigid. Since both $G_1$ and $G_2$ are proper vertex-maximal components, we know $V(G)$ $=$ $V(G_1)$ $\cup$ $V(G_2)$, a contradiction.

Next, we apply Claim \ref{clm:2thin} of Theorem \ref{thm:complete2thin} to complete the proof of Theorem \ref{thm:properComplete2thin} where there are no edge components.

\end{description}

Now we can consider the case with edge components in the cover and notice that each edge component contributes $1$ to the left hand side but contributes at most $1$ to the right hand side. Thus the inequality still holds.
\end{proof}

\noindent
{\bf Remark:}
(I) In fact, in Theorems \ref{thm:complete2thin} and \ref{thm:properComplete2thin}, when $G$ is not Maxwell-rigid or $G$ has at least $3$ non-trivial components
in the strong 2-thin cover $\X$, it turns out that we do not {\em need}
Theorem \ref{thm:main} to show that the IE$_{\text{full}}$ count
of the cover $\X$ is an upper bound on $\text{rank}(G)$. This is because we can show that $\M$ constructed in Theorem
\ref{thm:complete2thin} is in fact a {\em maximum-size} Maxwell-independent subgraph
of $G$. Otherwise we can find a maximal Maxwell-independent subgraph
$\M^\prime$ such that $|E(\M^\prime)| > |E(\M)|$. Then there must be some $i$ such
that $|E(\M^\prime)|_i| >|E(\M)|_i|$. We know $\P_2^i$ is Maxwell-independent in
every Maxwell-independent set of $C_i$ and since $\M^\prime|_i$ is
Maxwell-independent, hence $E(\M^\prime)|_i\cup \P_2^i$ is also
Maxwell-independent with size greater than $E(\M)|_i\cup \P_2^i$, which is
$E(\N_i)$. That is a contradiction to the fact that $\N_i$ is a maximum sized
Maxwell-independent subgraph of $C_i$. 
(II) We can use the maximum sized Maxwell-independent subgraph $\M$ constructed in Theorems \ref{thm:complete2thin} and \ref{thm:properComplete2thin} to test Maxwell-rigidity.

\section{Open Problems}
\label{conclusion}

\subsection{Extending Rank bound to Higher Dimensions}\label{sec:higher}

The definition of maximal Maxwell-independent set extends to all dimensions, leading to the following conjecture.
\begin{conjecture}\label{conj:kdim}
For any dimension $d$, the size of any maximal Maxwell-independent set gives an upper bound on the rank of the generic rigidity matroid of a graph $G$.
\end{conjecture}

\medskip\noindent
Moreover, the definition of 2-thin component graphs can also be extended to $d$ dimensions.

\begin{dfn}\label{dfn:dDimcomponentGraph}
Given $G=(V, E)$, let $\X=\{G_1, G_2,$ $ \ldots,$ $ G_m\}$ be a $(d-1)$-thin cover of $G$, i.e., $|V(G_i) \cap V(G_j) |$ $\leq$ $d-1$ for all $1 \leq i < j \leq m$. The {\em $(d-1)$-thin component graph} $\C_{\X}$ of $G$ contains a {\em component node} for each subgraph induced by $G_i$ in $\C_{\X}$ and whenever $G_i$ and $G_j$ share
a complete graph $K_{d-1}$ in $G$, their corresponding component nodes in $\C_{\X}$ are connected via an {\em edge
node}. The {\em degree} of a component node is defined to be the number of its adjacent edge nodes.
\end{dfn}

\medskip\noindent
To show Conjecture \ref{conj:kdim}, Proposition \ref{obs:vmmr} will have to be shown for $(d-1)$-thin covers and it is sufficient to show that the  $(d-1)$-thin component graphs of Maxwell-independent
sets are generalized partial ${d+1 \choose 2}$-trees. However, we conjecture one possible generalization of the
strongest bound that we are able to show in the proof of Lemma \ref{lem:comb}(\ref{thm:degree}).

\begin{conjecture}\label{conj:kdimTree}
For a Maxwell-independent graph with a $(d-1)$-thin cover $\X$ in $d$ dimensions
the average degree of the component nodes of any subgraph of the $(d-1)$-thin component graph is strictly
smaller than $d+1$.
\end{conjecture}

For $d=2$ this bound says that for Maxwell-independent sets, the average degree of the component nodes in the component graph is at most $2$. For $d=3$, however, we do not know of an example where all nodes have degree $\ge 3$. In fact, we do not even know of an example with average degree $\ge 3$. We state this as a conjecture for generalized body-hinge frameworks.
 
 \begin{conjecture}\label{conj:bodyhinge}
 In a $3$-dimensional independent generalized body-hinge framework (where several bodies can meet at a hinge and several hinges can share a vertex), the average number of hinges per body is less than $3$.
 \end{conjecture}

 \medskip\noindent
 Lemma \ref{lem:comb}(\ref{thm:degree}) shows that there is no subgraph of the 2-thin component graph where each
 component node has at least $4$ shared edges.
 A natural question is whether the counts for the so-called ``identified'' body-hinge
 frameworks can be used \cite{tay:rigidity1984, WhWh87,KatohTanigawa2009, Tanigawa:2012}, treating the component nodes as bodies
 and the shared edges as hinges. However, while identified body-hinge
 frameworks account for {\em several} component nodes sharing an edge (as we have here), generalized body-hinge structures may additionally have shared edges that have common vertices, hence the
 generic, identified body-hinge counts may not apply.

\subsection{Stronger Versions of Independence}\label{sec:stronger}
Even for Maxwell-independent graphs, the rank bounds of our Theorem \ref{thm:main}
can be arbitrarily bad. Even a simple
example of 2 bananas without the hinge edge has a single maximal
Maxwell-independent set of size 18 (which is the bound given by all of our
theorems), but its rank is only 17. Another example is the so-called ``$n$-banana'': it is formed by joining $n$ $K_5$'s on an edge and then removing that shared edge. In the $n$-banana, the whole graph is Maxwell-independent, so itself is the unique maximal Maxwell-independent set. This maximal Maxwell-independent set exceeds the rank of the $3$-dimensional generic rigidity matroid of $n$-banana by $n-1$.

\medskip\noindent 
Theorem \ref{thm:propermaximal} give alternative upper bounds for Maxwell-independent graphs. (In fact, Theorem \ref{thm:propermaximal} leads to a recursive method of obtaining a rank bound by recursively decomposing the graph into proper vertex-maximal
components. As one consequence, it gives an alternative, much simpler proof of
correctness for an existing algorithm called the Frontier Vertex algorithm (first version) that is based on
this decomposition idea as well as other ideas in this
chapter such as the component graph \cite{bib:survey}.)

\medskip\noindent
A natural open problem is to improve the bound in Theorem \ref{thm:main} directly by considering other
notions of independence that are stronger than Maxwell-independence. (Algorithms in \cite{sitharam:zhou:tractableADG:2004, bib:survey}
suggest and use stronger notions than Maxwell-independence,
but the algorithms usually
use some version of an inclusion-exclusion formula. They do not
provide explicit maximal sets of edges satisfying the stronger notions of
Maxwell-independence.
Neither do they prove that all such sets provide good bounds.)

\subsection{Bounds for Maxwell-Dependent Graphs Using 2-Thin Covers}\label{sec:Maxwelldependent}

\medskip\noindent 
While Theorem \ref{thm:propermaximal} gives a strong rank bound for Maxwell-independent graphs, Theorem \ref{thm:complete2thin} and Theorem \ref{thm:properComplete2thin} give much weaker bounds for Maxwell-dependent graphs because a collection of (proper) vertex-maximal, Maxwell-rigid subgraphs may be far from being a 2-thin cover. For example, in Figure \ref{fig:n2thin} we have $3$ $K_5$'s and the neighboring $K_5$'s share an edge with each other. There are two vertex-maximal, Maxwell-rigid subgraphs, each of which consists of $2$ $K_5$'s with a shared edge.

\begin{center}
\begin{figure}[!h]
\begin{center}
\scalebox{0.8}[0.8]{\includegraphics{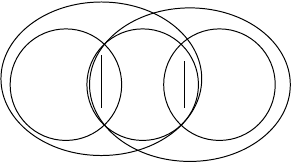}}
\end{center}
\caption{A cover of vertex-maximal components that is not 2-thin. The circles are $K_5$'s and the two larger ellipses are vertex-maximal, Maxwell-rigid subgraphs that form the cover.}
\label{fig:n2thin}
\end{figure}
\end{center}

\medskip\noindent
While many other 2-thin covers exist, the completeness as well as (proper) vertex-maximality
are important ingredients in the proofs of these theorems.
One possibility is to use 2-thin covers that are a subcollection of
(proper) vertex-maximal, Maxwell-rigid subgraphs. Another is to use collections of not necessarily vertex-maximal, but Maxwell-rigid subgraphs in which no proper subcollection of $2$ or more subgraphs has a Maxwell-rigid union.

\medskip\noindent
Another notion that can be used involves the following definition of {\em strong Maxwell-rigidity}:
\begin{dfn}
A graph $G=(V, E)$ is \emph{strong Maxwell-rigid} if for all maximal Maxwell-independent edge sets $E^\prime \subseteq E$, we have $|E^\prime|
= 3|V(E^\prime)|-6$.
\end{dfn}

It is tempting to use the approach in Theorem \ref{thm:complete2thin} to show that the IE$_{\text{full}}$ count for a cover by vertex-maximal, strong Maxwell-rigid subgraphs is a new upper bound on the rank. We conjecture the $2$-thinness of the cover, which is a crucial property explored in proving Theorem \ref{thm:complete2thin}.
\begin{conjecture}\label{conj:strong2co}
Any cover of a graph by a collection of vertex-maximal, strong Maxwell-rigid subgraphs is a 2-thin cover.
\end{conjecture}

\medskip\noindent
However, the idea in the proof of Theorem \ref{thm:complete2thin} will not work because the set $\M$, constructed in the proof of Theorem \ref{thm:complete2thin} that is of size equal to the IE$_{\text{full}}$ count, 
can now be of smaller size than {\em any} maximal Maxwell-independent set of $G$ as in the example of Figure \ref{smr_counter}.

\medskip\noindent
{\bf Example(Figure \ref{smr_counter}):} there are
five rings of $K_5$'s, where each ring consists of $7$ $K_5$'s. In the graph,
every $K_5$ is a vertex-maximal strong Maxwell-rigid subgraph, and the
IE$_{\text{full}}$
count for the cover $\X$ is $(3*5-6 )* (6*5+1) - 5*5 -10 = 244$. Here the $(6*5+1)$ is the number of $K_5$'s and $5*5 + 10$ is the total number of shared edges. But if we take $9$
edges in every $K_5$ except $T$ such that the missing edges are not shared, then we obtain a set $\M^\prime$ that is Maxwell-dependent. From $\M^\prime$ we drop one edge $e$ of $T$ and add one missing edge $f$ to the $K_5$ that shares $e$ with $T$. Then we get a set $\M^{\prime\prime}$ that is a {\it minimum-size} maximal Maxwell-independent set of $G$. The size of $\M^{\prime\prime}$ is $(6*9-5)*5=245$, where $6*9-5$ is the number of edges in each ring, not counting the edges in $T$ that are unshared in that ring. 

\medskip\noindent 
Hence in the Figure \ref{smr_counter} example, the IE$_{\text{full}}$ count is less than the size of any maximal Maxwell-independent set, so the latter cannot be used as a bridging inequality
as in Theorem \ref{thm:complete2thin}.
However, the IE$_{\text{full}}$ count does seem to give a direct upper bound on the rank (it is equal to the rank) hence a different proof idea might yield the required bound on rank.
\begin{center}
\begin{figure}[!h]
\begin{center}
\includegraphics[width=.5\textwidth]{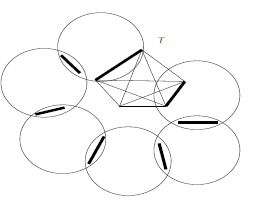}
\end{center}
\caption[A counterexample to show that IE$_{\text{full}}$ count of cover $\X$ by vertex-maximal, {\em strong} Maxwell-rigid subgraphs turns out to be smaller than the size of any maximal Maxwell-independent set.]{A counterexample to show that IE$_{\text{full}}$ count of cover $\X$ by vertex-maximal, {\em strong} Maxwell-rigid subgraphs turns out to be smaller than the size of any maximal Maxwell-independent set.
Start with a $K_5$, denoted $T$. Each of $5$ pairs of edges of $T$ is extended into a ring of $7$ $K_5$'s, where each ring is formed by closing a chain of $K_5$'s where the neighboring $K_5$'s share an edge (bold) with each other. In each
of the $5$ rings, every $K_5$ shares an edge with each of its two neighboring $K_5$'s and these two edges are non-adjacent. Note that in the figure, only one of the five rings is shown.}
\label{smr_counter}
\end{figure}
\end{center}

\subsection{Algorithms for Various Maximal Maxwell-Independent Sets}
So far the emphasis has been to find good upper bounds on rank and Theorem \ref{thm:main} shows that the {\it minimum-size} maximal Maxwell-independent set of a graph $G$ is at least $\text{rank}(G)$. A natural open problem is to give an algorithm that constructs a minimum-size,
maximal Maxwell-independent set of an arbitrary graph.

\medskip\noindent
Note that Maxwell-rigidity requires the {\it maximum} Maxwell-independent set to be of size $\geq 3|V|-6$. Although the maximum Maxwell-independent set is trivially as big as the rank (and is not directly relevant to finding good bounds on rank), covers by Maxwell-rigid components have played a role in some of the theorems above (Theorems \ref{thm:propermaximal}, \ref{thm:complete2thin}, \ref{thm:properComplete2thin}) that give useful bounds on rank. Recall that Hendrickson \cite{Hendrickson92conditionsfor} gives an algorithm to test $2$-dimensional Maxwell-rigidity by finding a maximal Maxwell-independent set that is automatically maximum for $d=2$. While an extension of Hendrickson \cite{Hendrickson92conditionsfor} to $3$ dimensions given in \cite{andrewThesis} finds {\em some} maximal Maxwell-independent set, it is not guaranteed to be maximum (or minimum). Thus another question of interest is whether maximum Maxwell-independent sets can be characterized in some natural way.

\subsection*{Acknowledgement}
We thank Bill Jackson and an anonymous reviewer for a careful reading and many constructive suggestions to improve the presentation.

\bibliographystyle{plain}

\bibliography{biblio}
\end{document}